\documentclass[lettersize,journal]{IEEEtran}
\usepackage{amsmath,amsfonts}
\usepackage{algorithmic}
\usepackage{array}
\usepackage[caption=false,font=normalsize,labelfont=sf,textfont=sf]{subfig}
\usepackage{svg}
\usepackage{textcomp}
\usepackage{booktabs}
\usepackage{stfloats}
\usepackage{url}
\usepackage{verbatim}
\usepackage{algorithm}
\newcommand{\Comment}[1]{\COMMENT{#1}}
\usepackage{graphicx}
\usepackage{cite}
\usepackage[x11names]{xcolor}
\usepackage{csquotes}
\usepackage{xcolor}  
\usepackage{pifont}

\usepackage{hyperref}
\usepackage[all]{hypcap} 
\usepackage{blindtext}
\usepackage{multirow}
\usepackage{algorithm} 
\usepackage{amsmath,amssymb}
\usepackage{amsthm}
\usepackage{booktabs}
\usepackage{diagbox}
\theoremstyle{definition} 
\newtheorem{definition}{Definition}
\newtheorem{theorem}{Theorem}
\renewcommand{\eqref}[1]{\textup{\hyperref[#1]{(\ref*{#1})}}}
\hypersetup{
  colorlinks=true,
  linkcolor=blue,   
  citecolor=blue,   
}

\begin{document}

\title{Lipschitz-Enforced Machine
Learning Framework for Accelerating Transient Stability  Analysis of  Networked Grid-Interactive Inverters}

\author{%
  Zhong~Liu,~\IEEEmembership{Student Member,~IEEE},~%
  Jialin~Zheng,~\IEEEmembership{Member,~IEEE},~%
  Xiaonan~Lu,~\IEEEmembership{Member,~IEEE}
}

\markboth{IEEE-TPEL}%
{Shell \MakeLowercase{\textit{et al.}}: A Sample Article Using IEEEtran.cls for IEEE Journals}
\maketitle
\begin{abstract}
The growing penetration of grid-connected inverters renders Transient Stability Analysis (TSA) increasingly challenging in modern power systems. Existing TSA methodologies encounter an intrinsic trade-off between accuracy and scalability when dealing with these networked inverter-based resources (IBRs). To bridge this gap, this paper proposes a  Lipschitz-enforced machine learning framework that leverages Lipschitz continuity to restructure the transient stability certification mechanism. By replacing computationally intensive verification procedures with a deterministic and efficient algebraic check, the proposed method enables rigorous stability guarantees for complex multi-inverter systems, effectively bypassing the complexity limits of traditional analytical approximations. Validated on networked Grid-Forming (GFM) inverter systems, the proposed framework accelerates the training process by over 5 times compared to existing methods. Notably, the proposed framework substantially outperforms traditional transient stability analysis approaches (e.g., Linear Matrix Inequality and Sum-of-Squares methods) by capturing up to 30\% larger Regions of Attraction (ROA), effectively shattering the conservativeness bottleneck that has long constrained traditional analytical tools. This advancement provides a scalable and theoretically rigorous solution for the TSA of networked IBRs in modern power grids.
\end{abstract}

\begin{IEEEkeywords}
Grid-interactive inverters, Lipschitz verification, Lyapunov function, transient stability analysis.
\end{IEEEkeywords}

\section{Introduction}
\IEEEPARstart{T}{he} increasing penetration of Inverter-Based Resources (IBRs) is fundamentally altering the dynamic characteristics of modern power systems. In this context, power electronic interfaces, such as Grid-Forming (GFM) inverters, are emerging as the cornerstone for maintaining voltage and frequency stability in low-inertia power networks\cite{Dynamics}. Unlike synchronous generators, these interfaces are governed by programmable control loops and fast switching actions. They introduce distinct nonlinear behaviors, notably the saturation constraints from current limiters and the synchronization dynamics\cite{xiongfei}. These complex dynamics impose severe challenges on post-disturbance stability assessment. While conventional linearization-based small-signal analysis remains foundational for evaluating stability under routine operational variations, its inherent linearity prevents it from capturing the severe nonlinear behaviors triggered by large disturbances \cite{intro}. Therefore, Transient Stability Analysis (TSA) acts as an equally indispensable counterpart to ensure overall system stability under severe grid events.

Existing TSA approaches generally fall into two categories: time-domain simulations \cite{TDS} and Lyapunov's direct method \cite{Lya}. Time-domain simulation, such as high-fidelity Electromagnetic Transient (EMT) simulation \cite{EMT}, serves as a widely used tool for TSA, especially for grid operation practice. However, it incurs a prohibitive computational burden due to the need for numerical integration over time. Furthermore, Time-domain simulation typically yields binary stability judgments (stable/unstable) without providing explicit insights into the stability margin, i.e., the Region of Attraction (ROA). To overcome these limitations, Lyapunov's direct method aims to analytically construct an energy function to assess transient stability \cite{Las}. While providing more insights into the system stability boundary, classical direct methods suffer from a fundamental lack of generalization when applied to IBRs. They are typically either constrained by specific structural forms (e.g., Brayton-Moser \cite{BM}), dependent on conservative Jacobian envelopes that simplify the global dynamics of IBRs (e.g., Linear Matrix Inequality [LMI] \cite{LMI}), or intrinsically confined to polynomial dynamics, which struggle to capture the severe nonlinearities of IBRs (e.g., Sum-of-Squares [SOS] \cite{SOS}). Consequently, a critical gap remains in developing a generalized TSA framework for complex networked grid-connected inverters.

To bridge this gap, leveraging Neural Networks (NNs) to parameterize Lyapunov functions has emerged as a promising direction. By exploiting the universal approximation capability of deep learning\cite{uni}, NNs have demonstrated remarkable success in constructing Lyapunov functions for the stability verification of general nonlinear dynamical systems \cite{lars,ETH,redesign,yachien}. Inspired by these theoretical advancements, recent efforts have successfully adapted these techniques to address the TSA of power electronics-dominated systems. For instance, in\cite{Tong}, a counterexample-guided learning framework is utilized to evaluate the TSA of inverter-based microgrids. Furthermore, in \cite{ECCE} and \cite{SCU}, tailored neural Lyapunov candidates and stability frameworks are deployed to evaluate the TSA of microgrids and GFM inverters. In\cite{my}, a dynamics-aware learning framework is proposed to facilitate the neural Lyapunov function training for IBRs in unknown scenarios. However, despite these advancements, existing neural Lyapunov frameworks face a new scalability bottleneck: the verification process. The intrinsic reliance on Satisfiability Modulo Theories (SMT) solvers\cite{sicun} to rigorously verify the Lyapunov conditions transforms the training loop into a computationally expensive procedure. As the system dimension increases with the integration of more inverters, the SMT solver often becomes powerless to find counter-examples within a reasonable time \cite{chuchu}, thus resulting in a severe scalability issue for multi-inverter networks.

To address the scalability bottleneck in TSA of multi-inverter networks, this work proposes a Lipschitz-enforced neural Lyapunov function training framework that fundamentally restructures the verification stage within neural Lyapunov learning. Unlike existing neural Lyapunov approaches that rely on SMT solvers to exhaustively search for counterexamples over the state space of inverter dynamics \cite{yachien,Tong,ECCE,SCU,my}, the proposed method exploits the Lipschitz continuity inherent in both physical IBR dynamics and neural Lyapunov candidates. Specifically, by estimating the Lipschitz constants, a rigorous algebraic safety threshold can be derived. This allows the verification process to be transformed from a computationally intensive satisfiability checking process into a deterministic condition check on discretized mesh points, which guarantees the satisfaction of Lyapunov conditions between mesh points without computationally intensive searching. Consequently, the paradigm shift fundamentally eliminates the reliance on SMT solvers, thereby achieving significantly improved computational efficiency and scalability for high-dimensional, multi-inverter networks while retaining rigorous stability guarantees.

The main contributions of this paper are summarized as follows:
\begin{enumerate}
    
    \renewcommand{\labelenumi}{\arabic{enumi})}

    \item A novel Lipschitz-enforced training framework is proposed for neural Lyapunov functions construction, which fundamentally transforms the verification mechanism from an random, expensive search (SMT-based) into a deterministic algebraic check. This transformation accelerates the training process by over 5 times compared to state-of-the-art SMT-guided frameworks.
    
    \item The proposed efficient training framework enables substantial scalability to high-dimensional nonlinear systems. We successfully scale the Neural Lyapunov analysis to complex, high-order networked GFM inverter systems, scenarios previously deemed computationally intractable for formal stability assessment.
    
    \item Critically, the proposed acceleration is achieved without compromising estimation quality or theoretical rigor. Unlike simplified analytical methods that often yield overly conservative results, our framework maintains rigorous stability guarantees and estimates a less conservative ROA comparable to existing methods.
\end{enumerate}

The remainder of this paper is organized as follows. Section II establishes the dynamic modeling and transient stability formulation for networked GFM inverters. Section III details the proposed Lipschitz-enforced verification framework, specifically tailored to address the scalability bottlenecks in power electronics-dominated systems. Section IV describes the data-driven training implementation for IBR networks. Section V demonstrates the framework's scalability and efficiency through extensive case studies. Finally, Section VI concludes this article.

\begin{figure}
    \centering
    \includegraphics[width=1\linewidth]{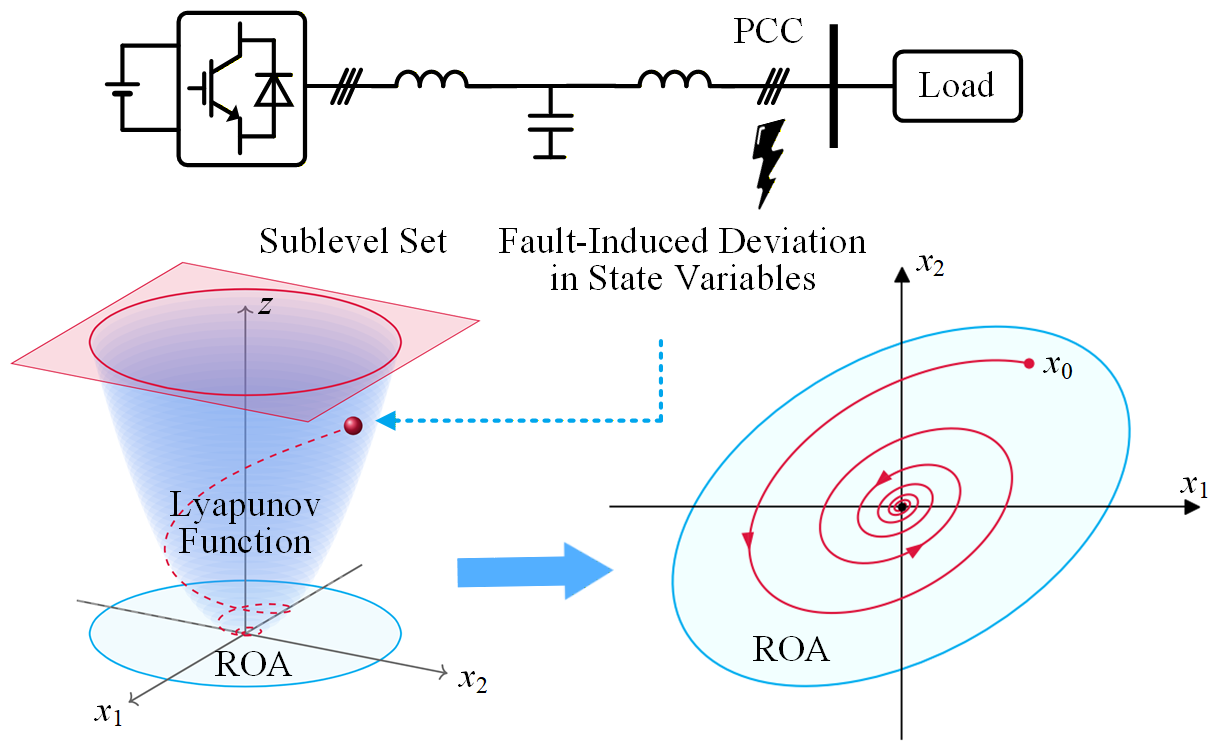}
    \caption{Illustration of Lyapunov functions and ROA.}
    \label{1}
\end{figure}

\begin{figure*}[t]
\centering
\includegraphics[width=1\textwidth]{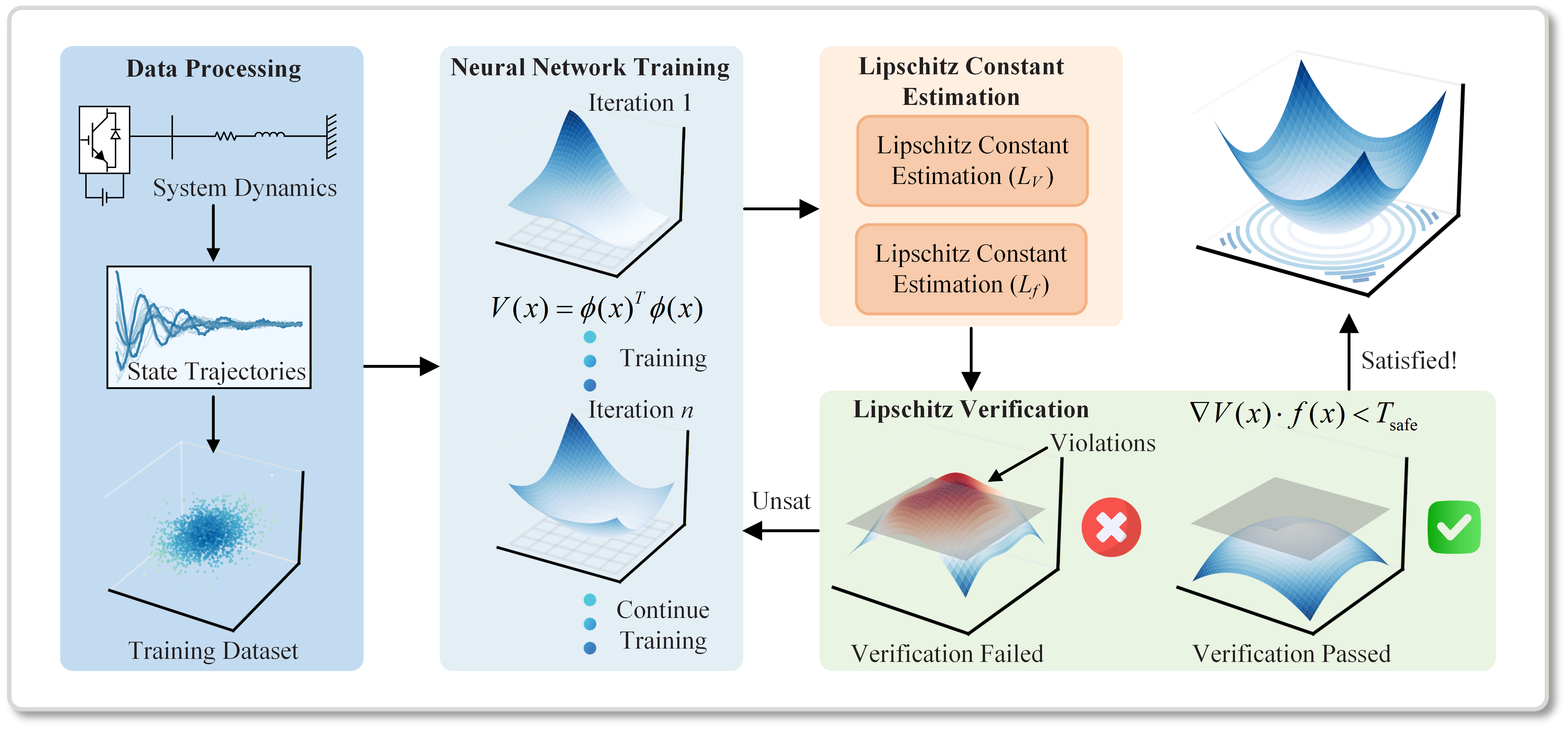}
\caption{Schematic of the Lipschitz-enforced learning framework.}
\label{framework}
\end{figure*}
\section{Modeling and Transient Stability Formulation of Networked GFM Inverters}

In the analysis of networked power electronics-dominated systems, the dynamic behavior of GFM inverters is fundamentally governed by their active and reactive power control loops. Considering the droop control strategy established in \cite{timescale}, the dynamics of the $k$-th GFM inverter can be mathematically characterized by:
\begin{equation}
\tau_{a}\,\dot{\delta}^{'}_k + \delta^{'}_k
= m_{a}\bigl(P^{*}_k - P_k\bigr)
\label{eq:dyn_angle}
\end{equation}
\begin{equation}
\tau_{v}\,\dot{E}^{'}_k + E^{'}_k
= n_{v}\bigl(Q^{*}_k - Q_k\bigr)
\label{eq:dyn_volt}
\end{equation}
where variables $\delta'_k$ and $E^{'}_k$ denote the deviations of the voltage phase angle and voltage magnitude from their respective nominal steady-state values. $\tau_a$ and $\tau_v$ represent the time constants of the low-pass filters; $m_a$ and $n_v$ are the frequency and voltage droop gains, respectively. $P_k$ and $Q_k$ indicate the real and reactive power injections measured at the point of common coupling (PCC), while $P^*_k$ and $Q^*_k$ signify the steady-state references.

When $N$ GFM inverters are interconnected via an impedance network, the algebraic power flow constraints must be satisfied at each node:
\begin{align}
P_{k} - G_{kk}E_{k}^{2} - \sum_{i\neq k} E_{k}E_{i}Y_{ki}\,\cos\bigl(\delta_{ki}-\varphi_{ki}\bigr) &= 0 \label{eq:pf_P} \\
Q_{k} + B_{kk}E_{k}^{2} - \sum_{i\neq k} E_{k}E_{i}Y_{ki}\,\sin\bigl(\delta_{ki}-\varphi_{ki}\bigr) &= 0 \label{eq:pf_Q}
\end{align}
In \eqref{eq:pf_P} and \eqref{eq:pf_Q}, $\delta_{ki} = \delta_{k} - \delta_{i}$. The terms $G_{kk}$, $B_{kk}$, $Y_{ki}$, and $\varphi_{ki}$ are parameters derived from the network admittance matrix. The steady-state operating points are determined by setting the time derivatives in \eqref{eq:dyn_angle} and \eqref{eq:dyn_volt} to zero.

In practical operations, the dynamics of phase angles and voltage magnitudes are largely decoupled. Since voltage magnitudes can typically be regulated via local compensation, transient stability degradation is predominantly driven by the phase angles. Therefore, focusing on these dominant angle dynamics, we define the global state vector $\mathbf{x} \in \mathbb{R}^{N}$ by aggregating the state variables of all networked GFM inverters:
\begin{equation}
\mathbf{x} = \left[ \delta'_1, \dots, \delta'_N \right]^\top
\label{eq:state_vector}
\end{equation}

To evaluate the transient stability of the networked GFM inverters following severe disturbances, Lyapunov's direct method is adopted to assess the asymptotic stability\cite{nonlinear} of the system. Specifically, a scalar energy-like function, i.e., the Lyapunov function $V(\mathbf{x})$, is introduced. The system equilibrium is guaranteed to be asymptotically stable if $V(\mathbf{x})$ is positive definite and its time derivative along the system trajectories is strictly negative. Numerically, this condition is characterized by the Lie derivative, defined as:
\begin{equation}
    \mathcal{L}_{f} V(\mathbf{x}) = \nabla V(\mathbf{x}) \cdot f(\mathbf{x}) < 0
    \label{lie}
\end{equation}

For practical IBR systems, determining the exact stability boundary is typically intractable. Therefore, the transient stability margin is conservatively quantified by ROA. As illustrated in Fig. \ref{1}, any fault-induced state deviation (e.g., $x_0$) that remains within ROA is theoretically guaranteed to converge back to the equilibrium point, thereby verifying the post-fault transient stability of the IBR networks.

\section{The Proposed Lipschitz-Enforced Verification Framework for Networked GFM Inverters}
\label{III}
\subsection{Framework Overview and the Lipschitz Verification Strategy}
\label{overview}
As illustrated in Fig. \ref{framework}, the proposed Lipschitz-enforced learning framework can be divided into two interdependent stages: data-driven neural Lyapunov function training and Lipschitz verification. The first stage focuses on constructing a neural Lyapunov candidate $V_\theta(x)$ through empirical loss minimization, and the core contribution of this paper is subsequently realized in the second stage: a deterministic verification paradigm tailored for IBR systems.

Historically, certifying the validity of neural Lyapunov functions relies heavily on SMT solvers. However, when applied to networked GFM inverters, the traditional SMT-based method inherently suffer from the curse of dimensionality. The exact verification over an infinite continuous state space is often computationally intractable, leading to profound scalability bottlenecks for practical grid applications.

To circumvent this limitation, we propose a paradigm shift from traditional exhaustive search to a Lipschitz-enforced verification framework. The core mechanism relies on the mathematical property of Lipschitz continuity (see Appendix \ref{deflip}). By rigorously bounding the maximum rate of change (i.e., the Lipschitz constant) of both the GFM system and the learned Lyapunov function, the proposed framework mathematically limits the worst-case variation between any two adjacent mesh points. This critical property successfully translates a computationally intractable verification problem into a deterministic, algebraic check on a finite set of mesh points. Consequently, the infinite continuous state space is effectively reduced to a bounded mesh evaluation, thus drastically enhancing the scalability for complex inverter networks while preserving mathematical rigor. The technical details of this paradigm are formulated in the following subsections.
\subsection{Lipschitz Constant Estimation}
\label{subsec:lip_estimation}

To facilitate the rigorous stability verification, the proposed framework requires estimation of the Lipschitz constants, which essentially quantifies the maximum possible fluctuation of the function value between sampled mesh points.

First, the Lipschitz constant of the IBR system vector field $f(x)$ (see Appendix \ref{deflip}), denoted as $L_f$, is estimated. This parameter quantifies the maximum rate at which the system dynamics change with respect to the state variations. The system Jacobian matrix $A$ is constructed by leveraging automatic differentiation:
\begin{equation}
A = \left. \frac{\partial f(\mathbf{x})}{\partial \mathbf{x}} \right|_{\mathbf{x}=\mathbf{x}^*}
\end{equation}
The Lipschitz constant $L_f$ is then approximated by the spectral norm (i.e., the maximum singular value) of this linearized state matrix:
\begin{equation}
    \hat{L}_f \approx \| A \|_2 = \sigma_{\max}(A)
    \label{eq:Lf_est}
\end{equation}

The second critical parameter involves the geometric properties of the learned neural network. Unlike the fixed system dynamics, the neural Lyapunov candidate $V_\theta(x)$ is a highly nonlinear function that evolves during training. To rigorously estimate the variation of the energy function value with respect to state perturbations, the Lipschitz constant of the Lyapunov function, denoted as $L_{V}$, is calculated empirically over the structural verification mesh $\mathcal{G}$. This process involves computing the gradient vector $\nabla V_\theta(x_g)$ for every node $x_g \in \mathcal{G}$ via backpropagation and determining the maximum Euclidean norm observed across the mesh:
\begin{equation}
\hat{L}_{V} := \max_{x_g \in \mathcal{G}} \big\| \nabla V_\theta(x_g) \big\|_2
\label{eq:Lv_est}
\end{equation}
The parameter $\hat{L}_{V}$ characterizes the maximum steepness of the learned energy landscape. By utilizing smooth activation functions (hyperbolic tangent) and regularizing the training process, the network is encouraged to learn smooth manifolds. Physically, $\hat{L}_{V}$ limits the maximum slope of the Lyapunov surface, guaranteeing that the function value does not change drastically in the continuous space between the discrete mesh points.

\begin{figure*}[t]
\centering
\includegraphics[width=0.85\textwidth]{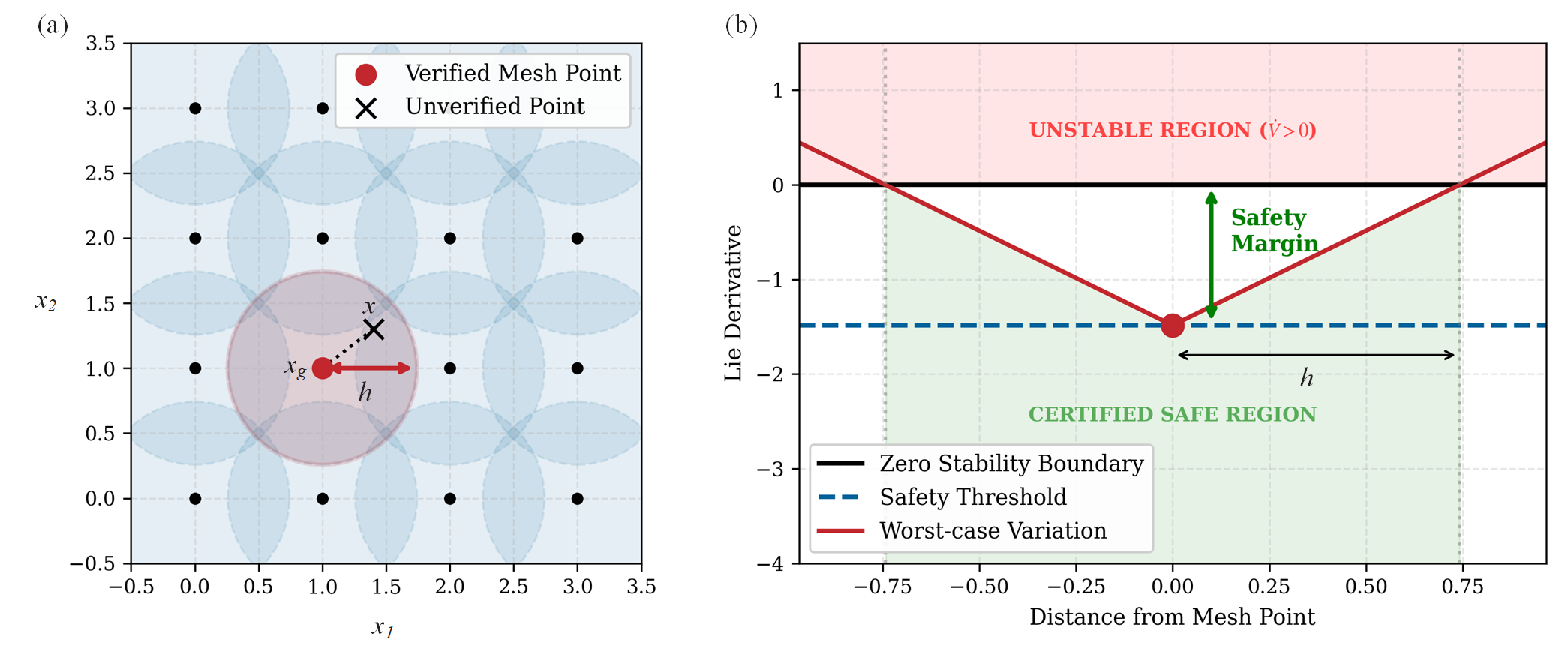}
\caption{Illustration of Lipschitz verification: (a) Domain discretization via $h$-net coverage. (b) Construction of the safety threshold $T_{\mathrm{safe}}$ to bound the worst-case variation of $\mathcal{L}_{f} V_\theta(x)$ in inter-mesh regions.}
\label{Lipschitz}
\end{figure*}

\begin{algorithm}[t]
\caption{Lipschitz-Enforced Training and Verification of Neural Lyapunov Functions}
\label{alg:lip_framework_compact}
\begin{algorithmic}[1]
\REQUIRE Dataset $\mathcal{X}$, Mesh $\mathcal{G}$, Margin $\rho$, Resolution $h$, Weight $\lambda$

\STATE \textbf{Initialization:} Parameterize $V_\theta(x)$ as in \eqref{eq:V_structure};
\STATE Estimate System Lipschitz Constant $\hat{L}_f \gets \sigma_{\max}(A)$ at equilibrium \Comment{Eq. \eqref{eq:Lf_est}}

\FOR{$k=1$ to $K_{\max}$}

    \STATE \textit{// Phase 1: Stability-Constrained Training (Sec. \ref{subsec:nn_training})}
    \FOR{$j=1$ to $N_{\mathrm{batches}}$}
        \STATE Sample mini-batch $\mathcal{B} \subset \mathcal{X}$
        \STATE Compute $\dot{V}_\theta(x) = \nabla V_\theta(x)^\top f(x)$ via automatic differentiation
        \STATE $\mathcal{L}_{\mathrm{deriv}} \gets \frac{1}{|\mathcal{B}|} \sum_{x \in \mathcal{B}} \mathrm{ReLU}(\dot{V}_\theta(x) + \rho)$ \Comment{Eq. \eqref{eq:loss_deriv}}
        \STATE $\mathcal{L}_{\mathrm{origin}} \gets (V_\theta(\mathbf{0}))^2$ \Comment{Eq. \eqref{eq:loss_origin}}
        \STATE Update $\theta$ to minimize $\mathcal{L}_{\mathrm{deriv}} + \lambda \mathcal{L}_{\mathrm{origin}}$ \Comment{Eq. \eqref{eq:total_loss}}
    \ENDFOR

    \STATE \textit{// Phase 2: Component-Wise Lipschitz Estimation (Sec. \ref{subsec:lip_estimation})}
    \STATE Estimate Neural Network Lipschitz Constant over $\mathcal{G}$:
    \STATE $\quad \hat{L}_V \gets \max_{x_g \in \mathcal{G}} \| \nabla V_\theta(x_g) \|_2$ \Comment{Eq. \eqref{eq:Lv_est}}
    \STATE Compute Composite Lipschitz Constant \& Safety Threshold:
    \STATE $\quad \mathcal{K}_{comp} \gets \hat{L}_V (1 + \hat{L}_f)$
    \STATE $\quad T_{\mathrm{safe}} \gets -\mathcal{K}_{comp} \cdot h$ \Comment{Based on Eq. \eqref{eq:verify_condition}}

    \STATE \textit{// Phase 3: Lipschitz-Based Verification (Sec. \ref{subsec:verification})}
    \STATE $IsVerified \gets \textbf{true}$
    \FORALL{$x_g \in \mathcal{G}$}
        \STATE Evaluate $\dot{V}_\theta(x_g) = \nabla V_\theta(x_g)^\top f(x_g)$
        \STATE \textbf{if} $\dot{V}_\theta(x_g) \ge T_{\mathrm{safe}}$ \textbf{then} 
            \STATE $IsVerified \gets \textbf{false}$
            \STATE \textbf{break} \Comment{Violation detected, continue training}
        \STATE \textbf{end if}
    \ENDFOR

    \IF{$IsVerified$}
        \STATE \textbf{return} \texttt{Satisfied}, $V_\theta$ \Comment{Early stopping upon successful verification}
    \ENDIF

\ENDFOR
\STATE \textbf{return} \texttt{Unverified}, last $\theta$
\end{algorithmic}
\end{algorithm}

\subsection{Lipschitz-Based Stability Verification}
\label{subsec:verification}

While the neural network is trained to minimize Lyapunov loss \eqref{eq:total_loss} on finite discrete samples, certifying asymptotic stability requires guaranteeing $\mathcal{L}_{f} V_\theta(x) < 0$ continuously over the entire domain $\Omega$. To bridge the gap between discrete evaluation and continuous certification, a rigorous verification procedure is executed utilizing the Lipschitz constant estimated in Section~\ref{subsec:lip_estimation}.

The verification process begins by discretizing the continuous domain $\Omega$ into a structural mesh $\mathcal{G}$. A critical parameter for this discretization is the \textit{spatial resolution} $h$, as visualized in the global mesh coverage illustration of Fig.~\ref{Lipschitz} (a). Formally, $h$ is defined as the maximum Euclidean distance from any point in the continuum $\Omega$ to its nearest mesh point in $\mathcal{G}$:
\begin{equation}
    h = \sup_{x \in \Omega} \min_{x_g \in \mathcal{G}} \|x - x_g\|_2
\end{equation}
This definition ensures that the mesh $\mathcal{G}$ constitutes a $h$-net, where the union of the $h$-circles centered at each mesh point forms a complete covering of the region of interest, leaving no state unmonitored.

To certify stability between these discrete mesh points, a safety margin strategy is employed to counteract potential inter-mesh discretization errors. Different from traditional verification conditions that merely verifying that the Lie derivative is negative ($\mathcal{L}_{f} V_\theta(x) < 0$) at sampled mesh points, we adopt a tightened verification condition by enforcing a carefully designed threshold to prevent function value at intermediate points potentially deviating and breaching the $\mathcal{L}_{f} V_\theta(x) =0$ plane. By constraining the derivative at each mesh point to remain below this threshold, the intrinsic property of Lipschitz continuity is leveraged to rigorously limit the maximum possible variation of the function. The proposed mechanism mathematically guarantees that the Lie derivative across the entire continuum remains strictly confined below the zero plane, thereby preventing any violation of the stability criteria in the regions between mesh points.

To visualize the verification process intuitively (as shown in the 2D illustration in Fig.~\ref{Lipschitz}\,(b)), consider the worst-case scenario where the composite Lie derivative function $\mathcal{L}_{f} V_\theta(x)$ rises at its maximum possible rate, governed by the composite Lipschitz constant $\mathcal{K}_{comp} = L_V(1+L_f)$, as it moves away from a sampled mesh point. Over the maximum possible distance $h$, the function value can increase by at most $\mathcal{K}_{comp} \cdot h$, which constitutes the required safety margin. Therefore, to rigorously guarantee that the curve never intersects the unstable $\mathcal{L}_{f} V_\theta(x) =0$ plane, the value at the mesh point $x_g$ must start "deep enough" (i.e., sufficiently negative) to fully absorb this potential rise.

This leads to the rigorous verification criterion:
\begin{equation}
    \mathcal{L}_{f} V_\theta(x_g) < -L_V(1+L_f) h, \quad \forall x_g \in \mathcal{G}
    \label{eq:verify_condition}
\end{equation}
The inequality imposes a strict safety threshold at $T_{\mathrm{safe}}=-L_V(1+L_f) h$. If the Lie derivative at a mesh point lies below this threshold, mathematically, the function value within the surrounding $h$-neighborhood is guaranteed to remain negative.

The verification is implemented as a global Boolean check across the finite set $\mathcal{G}$. This process can be visualized through the energy landscape surface as shown in the Lipschitz Verification block of Fig. \ref{framework}. A successful certification implies that the evaluated Lie derivative surface lies entirely beneath the safety threshold plane defined by $T_{\mathrm{safe}}$. Conversely, a verification failure occurs if any portion of the surface violates this threshold, indicated by the surface protruding above the safety plane, which signals a potential risk of positive derivatives in the inter-mesh regions. The rigorous mathematical proof, which certifies that this discrete Boolean check guarantees continuous-time asymptotic stability, is provided in Appendix~\ref{app:proof}.

\section{Neural Lyapunov Function Training and Implementation}
\subsection{Data Acquisition and Set Construction}
The successful implementation of the proposed verification framework requires a trained neural network to serve as the Lyapunov candidate. This section elaborates on this data-driven construction stage, beginning with data acquisition. The mathematical formulation established in Appendix \ref{deflip} constitutes the model governing the IBR system evolution. This model serves as the ground truth for generating the trajectory data required for the proposed learning framework.

In the \emph{Data Processing} block of Fig.~\ref{framework}, time-domain trajectories are collected to represent the system behaviors within the state space. The multi-inverter system is subjected to a diverse set of disturbances, including fault scenarios of varying severities, which perturb the system states to a wide range of initial positions $\mathbf{x}_0$ surrounding the equilibrium. For each event, the continuous-time evolution is discretized and recorded at a uniform sampling interval $\Delta t$, resulting in a trajectory dataset:
\begin{equation}
\mathcal{D}_{\mathrm{traj}} := \{\mathbf{x}(t_j)\}_{j=0}^{N_T}, \qquad t_j = t_0 + j\Delta t
\label{eq:traj_dataset}
\end{equation}

To effectively train the neural networks and perform rigorous stability verification, a compact Region of Interest (ROI), denoted as $\Omega \subset \mathbb{R}^{N}$, is defined. This region encapsulates the physically meaningful operating range of the IBR system states (e.g., angle and voltage limits). A comprehensive point cloud $\mathcal{X}$ is then constructed by aggregating samples derived from $\mathcal{D}_{\mathrm{traj}}$ that fall within $\Omega$:
\begin{equation}
\mathcal{X} := \{\mathbf{x}_i\}_{i=1}^{N} \subset \Omega
\label{eq:state_samples}
\end{equation}
This dataset $\mathcal{X}$ is utilized principally for the gradient-based training of the neural Lyapunov candidate, and the estimation of Lipschitz constants.

\subsection{Neural Network Training}
\label{subsec:nn_training}

To certify the transient stability of the networked GFM system via the proposed framework, a neural network is employed to construct the candidate Lyapunov function. The training process is formulated as a stability-guided learning objective, aimed at shaping the energy landscape to satisfy the Lyapunov criteria, i.e., the positive definiteness and negative definiteness of its Lie derivative \eqref{lie}.

Standard feedforward neural networks (FNN) do not inherently guarantee the mathematical properties required for stability certification. To enforce positive definiteness by construction, the candidate Lyapunov function $V_\theta(x)$ is parameterized as a quadratic form of the network output, as illustrated in the \textit{Neural Network Training} block of Fig.~\ref{framework}. Let $\phi_\theta: \mathbb{R}^{N} \to \mathbb{R}^{n_\phi}$ denote a mapping parameterized by a fully connected FNN with $K$ hidden layers and hyperbolic tangent ($\tanh$) activation functions. Here, $N$ corresponds to the dimension of the system state space $\mathbf{x}$, and $n_\phi$ represents the dimension of the network's output feature vector. The candidate function is defined as:
\begin{equation}
    V_\theta(x) = \phi_\theta(x)^\top \phi_\theta(x)
    \label{eq:V_structure}
\end{equation}
This structural design guarantees that $V_\theta(x) \ge 0$ for all states $x$, thereby satisfying the positive definiteness condition intrinsically.

The core learning objective is to shape the Lie derivative $\mathcal{L}_{f} V_\theta(x)$ to satisfy asymptotic stability conditions. To avoid trivial mathematical solutions where the candidate Lyapunov function uniformly converges to zero, a strictly positive stability margin $\rho$ is explicitly introduced. By enforcing $\mathcal{L}_{f} V_\theta(x) \le -\rho$ across the state space, the framework guarantees a valid and robust transient stability certification.

The training objective is defined by a composite loss function $\mathcal{L}(\theta)$ computed over the dataset $\mathcal{X}$. The primary component penalizes violations of the strictly negative derivative condition:
\begin{equation}
    \mathcal{L}_{\mathrm{deriv}}(\theta) = \frac{1}{|\mathcal{X}|} \sum_{x_i \in \mathcal{X}} \mathrm{ReLU}\left( \mathcal{L}_fV_\theta(x_i) + \rho \right)
    \label{eq:loss_deriv}
\end{equation}
The $\mathrm{ReLU}(\cdot)$ operator ensures that a penalty is incurred only when the derivative fails to be sufficiently negative (i.e., $\mathcal{L}_fV_\theta(x_i) > -\rho$). To simultaneously ensure that the Lyapunov function vanishes at the equilibrium, a soft constraint term is incorporated:
\begin{equation}
    \mathcal{L}_{\mathrm{origin}}(\theta) = \left( V_\theta(\mathbf{0}) \right)^2
    \label{eq:loss_origin}
\end{equation}
where $\mathbf{0}$ represents the equilibrium state vector. The total optimization objective is a weighted sum of these terms:
\begin{equation}
    \min_{\theta} \quad \mathcal{L}(\theta) = \mathcal{L}_{\mathrm{deriv}}(\theta) + \lambda \mathcal{L}_{\mathrm{origin}}(\theta)
    \label{eq:total_loss}
\end{equation}
where $\lambda$ is a weighting hyperparameter. The network parameters $\theta$ are updated iteratively using the Adam optimizer. By minimizing \eqref{eq:total_loss}, the neural network is driven to approximate a valid Lyapunov function that satisfies stability conditions. Once this empirical training phase converges, the resulting network $V_\theta(x)$ is passed to the Lipschitz verification stage (Section~\ref{III}) to certify the stability. The complete algorithmic procedure, interleaving the stability-guided training phase with Lipschitz-based verification, is detailed in Algorithm~\ref{alg:lip_framework_compact}.
\section{Case Study}
To rigorously validate the proposed Lipschitz-based learning framework, this section presents case studies on inverter systems of increasing complexity with 3, 4, and 5 GFM units and different system topologies, thereby substantiating its effectiveness. Particularly, in order to verify the scalability of the proposed TSA approach and safely emulate various fault scenarios, a versatile testbed enabled by Typhoon Hardware-in-the-Loop (HIL) platform is implemented, as shown in Fig. \ref{fig:hil_setup}. The power stage of the GFM inverters and the distribution network are emulated in real-time using HIL 404 and HIL 604 units. While the control algorithms for each GFM unit are implemented on separate Digital Signal Processor (DSP) boards (TI TMS320F28379D). These controllers communicate with the HIL emulator via analog and digital I/O channels, closing the loop to replicate control interactions. A host PC and an oscilloscope are utilized to monitor system states, capture transient waveforms, and validate the theoretical stability predictions (ROA) against time-domain measurements.

\begin{figure}[t]
    \centering
    \includegraphics[width=1\linewidth]{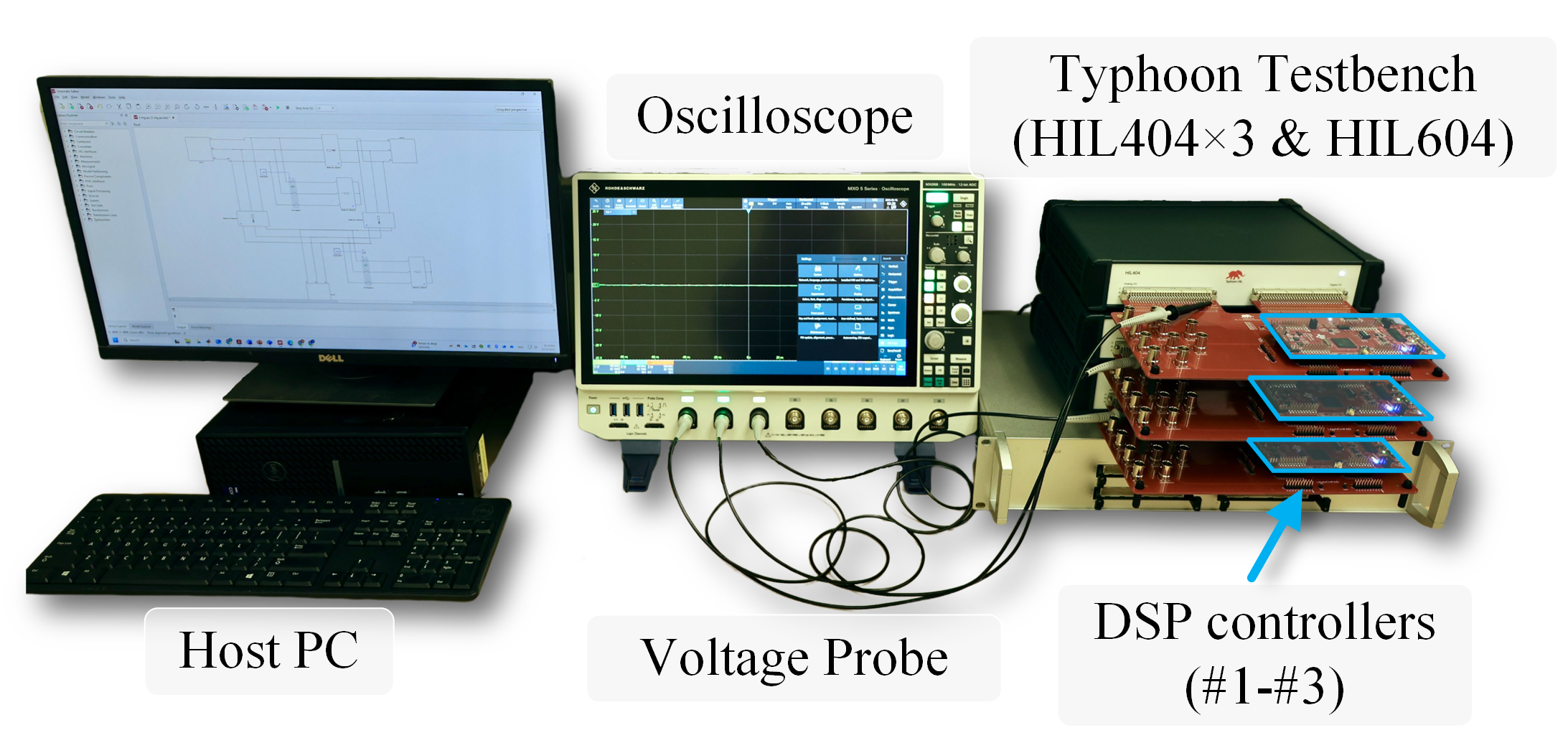}
    \caption{Versatile and scalable multi-inverter testbed.}
    \label{fig:hil_setup}
\end{figure}
\begin{figure*}[t]
\centering
\label{}
\includegraphics[width=1\textwidth]{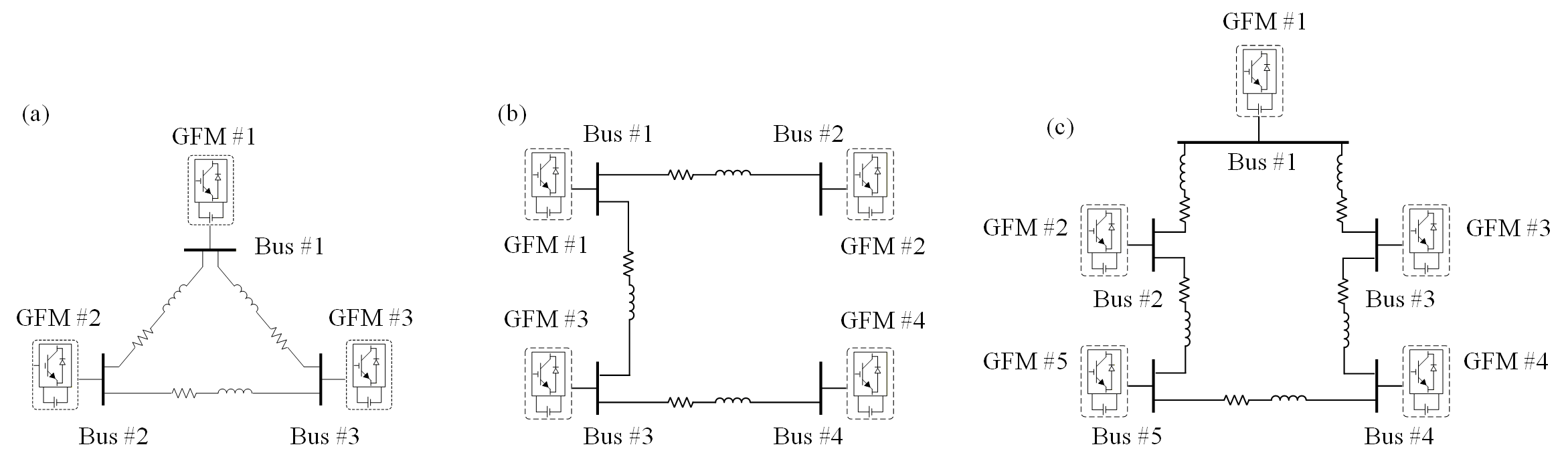}
\label{schematic}
\vspace{-0.8cm}
\caption{(a) Schematic of 3-GFM system. (b) Schematic of 4-GFM system. (c). Schematic of 5-GFM system.}
\label{fig3}
\end{figure*}
\begin{figure}
    \centering
    \includegraphics[width=1\linewidth]{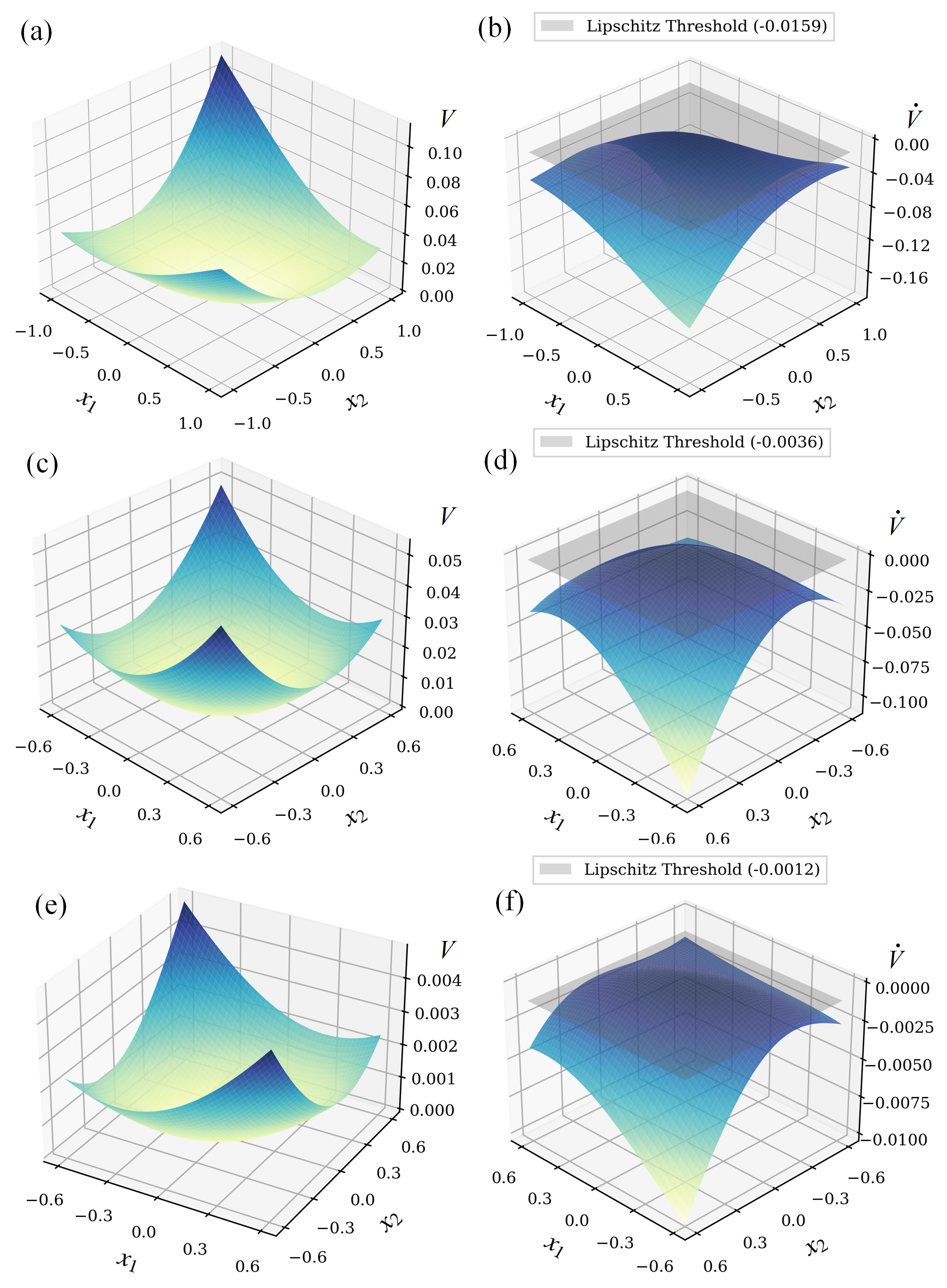}
    \caption{Training results: (a) Lyapunov functions of 3-GFM system. (b) Lie derivative $\mathcal{L}_{f} V_\theta(x)$ of 3-GFM system. (c) Lyapunov functions of 4-GFM system. (d) Lie derivative $\mathcal{L}_{f} V_\theta(x)$ of of 4-GFM system. (e) Lyapunov functions of 5-GFM system. (f) Lie derivative $\mathcal{L}_{f} V_\theta(x)$ of 5-GFM system.}
    \label{combined Lyas}
\end{figure}
For the 3-GFM system (Case I), the system configuration are presented in Fig. \ref{schematic}\,(a), with detailed parameters provided in Table \ref{tab:3mg}. Trajectory data collected from the HIL setup were utilized to generate the training dataset, which was subsequently fed into the neural Lyapunov function training procedure. Following the Lipschitz constant estimation and verification steps, the proposed framework successfully derived a valid neural Lyapunov function and its Lie derivative within 22 seconds, as illustrated in Fig. \ref{combined Lyas}. In Fig. \ref{combined Lyas}\,(a), the learned Lyapunov function $V_\theta(x)$ exhibits a distinct convex geometry (positive definiteness), while the corresponding Lie derivative surface $\mathcal{L}_{f} V_\theta(x)$ (Fig. \ref{combined Lyas}\,(b)) remains strictly negative within the ROI. Crucially, the Lie derivative surface is maintained below the designated Lipschitz safety margin. This geometric property is pivotal: it not only ensures negative definiteness on the discrete sampled points but also strictly guarantees global negative definiteness over the continuous state space of interest, thereby certifying the system's asymptotic stability. A quantitative comparison with existing methods is summarized in Table \ref{case_study_comparison}. In terms of computational efficiency, the proposed method is approximately three times faster than the neural Lyapunov method and six times faster than the LMI approach. Regarding the SOS method, although it exhibits a shorter computation time, it is subject to a fundamental limitation when applied to GFM inverter systems. Since SOS optimization requires the system dynamics to be polynomial, the intrinsic sinusoidal nonlinearities of the GFM system have to be approximated via Taylor expansion prior to optimization. Consequently, the stability certificate obtained by SOS is valid only for the local approximated model, rather than the actual high-order nonlinear system. The approximation inevitably introduces errors that compromise the reliability of the stability assessment. In contrast, the proposed method directly processes the full nonlinear dynamics without requiring any model simplification or polynomial approximation. This ensures that the synthesized Lyapunov function faithfully captures the true system behavior. Furthermore, the ROA estimated by the proposed method is less conservative than that of the counterparts, as visualized in Fig. \ref{ROA}. A detailed quantitative performance metric is provided in Fig. \ref{metric}.
\begin{table}[htbp]
    \centering
    
    \caption{System Parameters for the 3-GFM Case}
    \label{tab:3mg}
    \resizebox{\columnwidth}{!}{%
    \begin{tabular}{@{}ccccc ccc@{}}
    \toprule
    \multicolumn{5}{c}{\textbf{GFM Parameters}} & \multicolumn{3}{c}{\textbf{Network Parameters}} \\
    \cmidrule(r){1-5} \cmidrule(l){6-8}
    \textbf{GFM} & $\tau_{ai}$ & $m_{ai}$ & $P_i^*$ & $\delta_i^*$ & \textbf{Line} & $G_{ij}$ & $B_{ij}$ \\
    \textbf{Index} & (p.u.) & (p.u.) & (p.u.) & (rad) & $i-j$ & (p.u.) & (p.u.) \\
    \midrule
    1 & 1.2 & 0.2 & 0.1995 &  0.0000  & 1-2 & 0.4528 & -0.4151 \\
    2 & 1.0 & 0.2 & 1.3644 &  0.9716  & 1-3 & 0.3659 & -0.3317 \\
    3 & 0.8 & 0.2 & 0.0476 & -0.7919  & 2-3 & 0.4192 & -0.3308 \\
    \midrule 
    \multicolumn{8}{c}{System Base Values: $S_{\text{base}} = 1$ MVA, $V_{\text{base}} = 480$ V} \\
    \bottomrule
    \end{tabular}%
    }
    
    \vspace{1.5em} 
    
    \caption{System Parameters for the 4-GFM Case}
    \label{tab:4mg}
    \resizebox{\columnwidth}{!}{%
    \begin{tabular}{@{}ccccc ccc@{}}
    \toprule
    \multicolumn{5}{c}{\textbf{GFM Parameters}} & \multicolumn{3}{c}{\textbf{Network Parameters}} \\
    \cmidrule(r){1-5} \cmidrule(l){6-8}
    \textbf{GFM} & $\tau_{ai}$ & $m_{ai}$ & $P_i^*$ & $\delta_i^*$ & \textbf{Line} & $G_{ij}$ & $B_{ij}$ \\
    \textbf{Index} & (p.u.) & (p.u.) & (p.u.) & (rad) & $i-j$ & (p.u.) & (p.u.) \\
    \midrule
    1 & 1.2 & 1.2 & 0.2604  & 0.0000  & 1-2 & 0.4515 & -0.4141 \\
    2 & 1.0 & 1.2 & -0.5843 & -1.0472 & 1-3 & 0.6403 & -0.4601 \\
    3 & 0.8 & 1.2 & 1.1924  & 2.3562  & 3-4 & 0.3669 & -0.3306 \\
    4 & 1.0 & 1.2 & -0.2244 & 0.5236  & --  & --      & --      \\
    \midrule 
    \multicolumn{8}{c}{System Base Values: $S_{\text{base}} = 1$ MVA, $V_{\text{base}} = 480$ V} \\
    \bottomrule
    \end{tabular}%
    }
    
    \vspace{1.5em} 
    
    \caption{System Parameters for the 5-GFM Case}
    \label{tab:5mg}
    \resizebox{\columnwidth}{!}{%
    \begin{tabular}{@{}ccccc ccc@{}}
    \toprule
    \multicolumn{5}{c}{\textbf{GFM Parameters}} & \multicolumn{3}{c}{\textbf{Network Parameters}} \\
    \cmidrule(r){1-5} \cmidrule(l){6-8}
    \textbf{GFM} & $\tau_{ai}$ & $m_{ai}$ & $P_i^*$ & $\delta_i^*$ & \textbf{Line} & $G_{ij}$ & $B_{ij}$ \\
    \textbf{Index} & (p.u.) & (p.u.) & (p.u.) & (rad) & $i-j$ & (p.u.) & (p.u.) \\
    \midrule
    1 & 1.2 & 1.2 & 0.2604  & 0.0000  & 1-2 & 0.4515 & -0.4141 \\
    2 & 1.0 & 1.2 & -1.0028 & -1.0472 & 1-3 & 0.6403 & -0.4601 \\
    3 & 0.8 & 1.2 & 1.1924  & 2.3562  & 2-5 & 0.3801 & -0.3427 \\
    4 & 1.0 & 1.2 & -0.5612 & 0.5236  & 3-4 & 0.3669 & -0.3306 \\
    5 & 1.2 & 1.2 & -0.2338 & 0.3000  & 4-5 & 0.4207 & -0.3310 \\
    \midrule 
    \multicolumn{8}{c}{System Base Values: $S_{\text{base}} = 1$ MVA, $V_{\text{base}} = 480$ V} \\
    \bottomrule
    \end{tabular}%
    }
\end{table}

To validate the accuracy of the estimated ROA, the 3-GFM system was subjected to various large-signal disturbances for transient stability assessment. 
In the first test case, a three-phase fault with a fault resistance of $R_f=0.08\,\Omega$ was applied across Lines 1-2, 1-3, and 2-3 during the interval $t=5.5$-$6.0$\,\text{s}. This disturbance displaced the system state trajectories to a post-fault initial condition of $[\delta_1, \delta_2, \delta_3] = [-0.18, 0.64, -0.97]\,\text{p.u.}$, corresponding to the yellow triangle in Fig. \ref{waves}\,(a). As shown in the time-domain waveforms, following the fault clearance, both the phase angles and the phase-A current of GFM \#1 asymptotically converged to their pre-fault equilibrium points, confirming the stability prediction within the estimated ROA. Subsequently, an extended three-phase short-circuit fault (direct-to-ground) involving Lines 1-2 and 2-3 within time interval ($t=5.0\,\text{s}\text{-}6.0\,\text{s}$) was applied. This fault perturbed the system states to $[\delta_1, \delta_2, \delta_3] = [0.06, 0.83, -0.76]\,\text{p.u.}$, marked by the red dot in Fig. \ref{waves}\,(a). As shown by the waveforms, the phase angles stabilized, and the phase-A current of GFM \#1 successfully recovered to its nominal steady-state magnitude after the fault was cleared. Collectively, these time-domain results demonstrate that the system maintains stability under distinct perturbation levels, thereby validating the effectiveness and reliability of the estimated ROA.

\begin{table*}[b]
\centering
\small
\caption{Comparison of Computation Speed and ROA Size Across Three Case Studies}
\label{case_study_comparison}
\setlength{\tabcolsep}{5pt}
\renewcommand{\arraystretch}{1.2}
\begin{tabular}{lcccccccc}
\toprule
\multirow{2}{*}{\textbf{Method}} &
\multicolumn{2}{c}{\textbf{Case I}} &
\multicolumn{2}{c}{\textbf{Case II}} &
\multicolumn{2}{c}{\textbf{Case III}} \\
\cmidrule(lr){2-3}\cmidrule(lr){4-5}\cmidrule(lr){6-7}\cmidrule(lr){8-9}
 & Time (s) & ROA (\%) & Time (s) & ROA (\%) & Time (s) & ROA (\%) \\
\midrule
Lipschitz Lyapunov          & 22    & 51.33 & 295   & 52.74 & 2361  & 62.59 \\
Neural Lyapunov             & 72    & 41.39 & 3664  & 46.27 & 48600 & \textit{Fail} \\
Sum-of-Square                & 4     & 21.84 & 51.8  & 29.57 & 1201  & 48.12 \\
Linear Matrix Inequality     & 130   & 22.43 & 6651  & 23.79 & 10928 & 1.87 \\
\bottomrule
\end{tabular}%
\end{table*}

To rigorously evaluate the scalability of the proposed framework, the methodology is extended to higher-order networked systems: a 4-GFM topology (Case II) and a 5-GFM topology (Case III). The corresponding system topologies are illustrated in Fig. \ref{schematic}, with detailed parameters provided in Table \ref{tab:4mg} and Table \ref{tab:5mg}, respectively. The trained Lyapunov functions and Lie derivatives for these two cases are shown in Fig. \ref{combined Lyas}. As evident from the visualizations, the learned Lyapunov functions exhibit strict positive definiteness, while their corresponding Lie derivatives consistently remain below the Lipschitz threshold, thereby mathematically guaranteeing the system's asymptotic stability.
\begin{figure*}[t]
\centering
\includegraphics[width=1\textwidth]{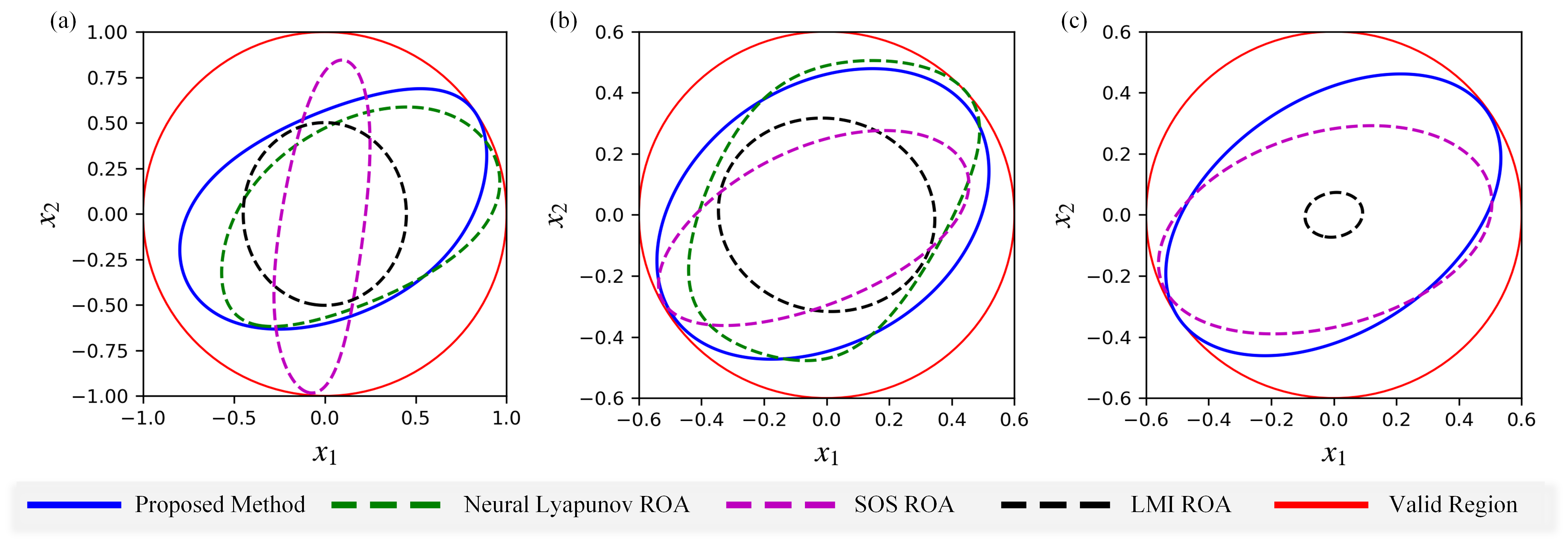}
\caption{ROA comparison by different methods across three case studies: (a) Case I (3-GFM); (b) Case II (4-GFM); and (c) Case III (5-GFM).}
\label{ROA}
\end{figure*}
\begin{figure}[!t]
    \centering
    \includegraphics[width=0.8\linewidth]{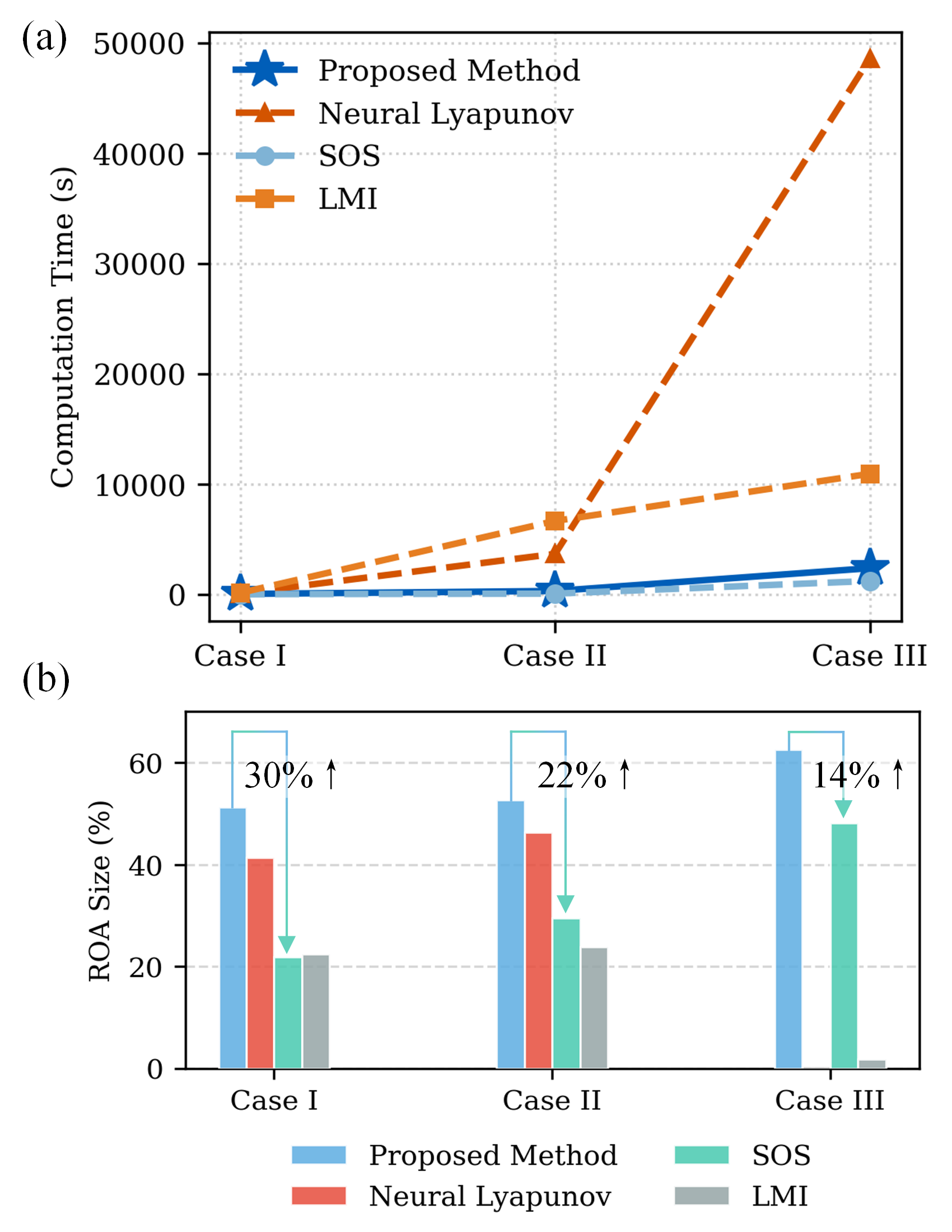}
    \caption{Quantitative performance evaluation across three case studies: (a) Computation time comparison. (b) ROA size comparison.}
    \label{metric}
\end{figure}

A pivotal advantage of the proposed approach is the computational efficiency in high-dimensional state spaces. As illustrated in Fig. \ref{metric}\,(a) and Table \ref{case_study_comparison}, traditional analytical methods such as SOS and LMI suffer from dramatic complexity growth as the system dimension increases. Specifically, for the 5-GFM case, the neural Lyapunov method required over 13 hours but fail to converge, while the SOS method required extensive preprocessing for polynomial approximation. In contrast, the proposed Lipschitz-enforced framework was completed within 295s for the 4-GFM system and 2361s for the 5-GFM system. These results demonstrate that the computational burden of the proposed method scales moderately with system complexity, making it tractable for larger-scale networks.
\begin{figure*}
    \centering
    \includegraphics[width=1\linewidth]{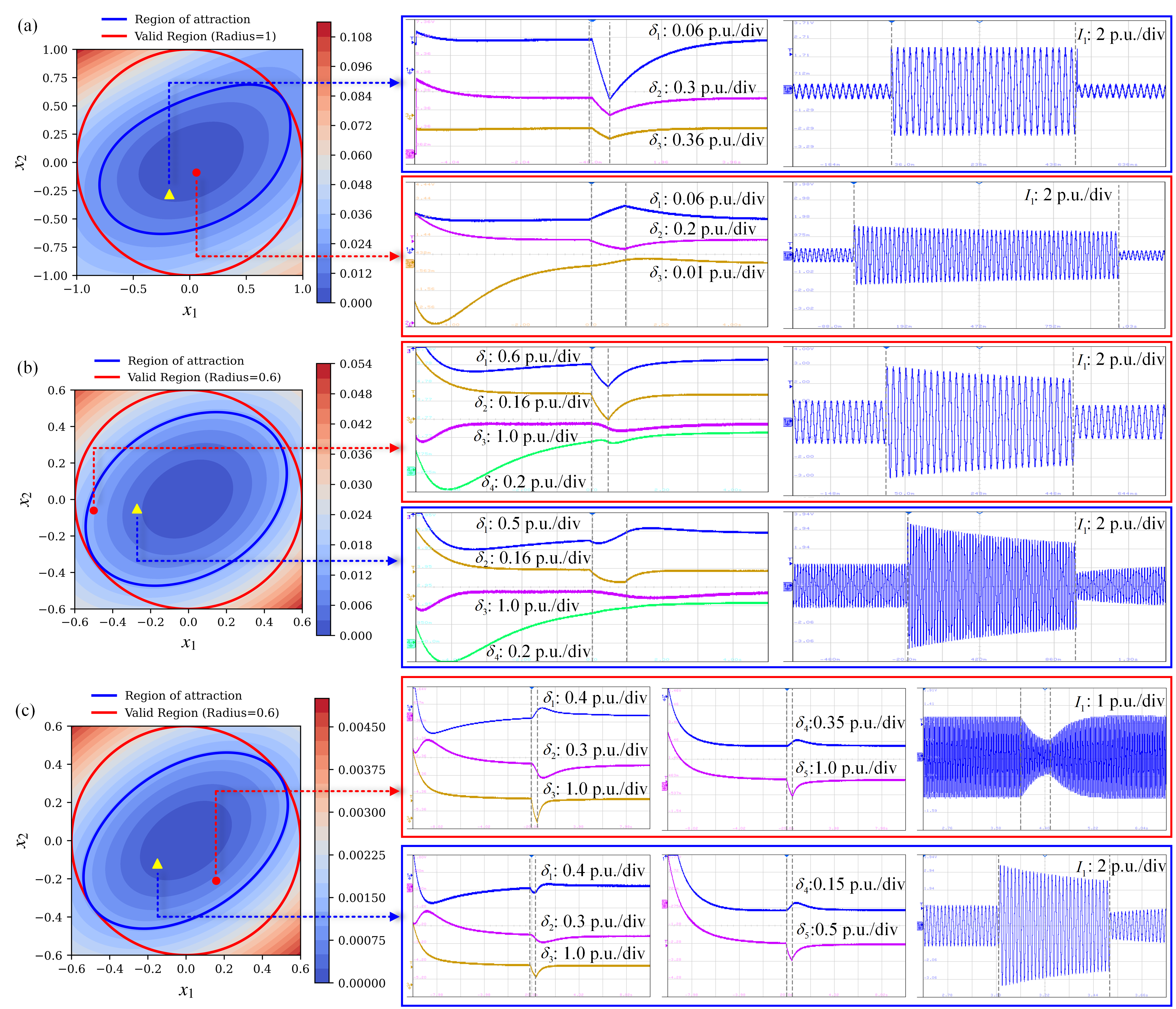}
    \caption{Time-domain verification of the estimated ROAs under large-signal disturbances. (a) Case I: responses of phase angles and phase-A current of GFM~\#1. (b) Case II: responses of phase angles and phase-A current of GFM~\#1. (c) Case III: responses of phase angles and phase-A current of GFM~\#1.}
    \label{waves}
\end{figure*}

Regarding estimation accuracy, Table \ref{case_study_comparison} and Fig. \ref{metric}\,(b) reveal that the proposed method consistently yields the least conservative stability regions. Notably, in the 5-GFM case, the ROA estimated by the LMI method collapses to a negligible volume ($1.87\%$) due to the conservative interval-bounded approximations (e.g., min-max bounds in Takagi-Sugeno modeling). Conversely, the proposed method successfully identifies a substantial stability region ($62.59\%$), highlighting its capability to directly address the intrinsic sinusoidal non-linearities in GFM-systems without the excessive conservatism introduced by model simplifications.

Extensive time-domain HIL tests were conducted to validate these ROAs under large-signal disturbances, as visualized in Fig. \ref{waves}\, (b) and (c). 
For the 4-GFM topology, two distinct fault scenarios were applied to evaluate the stability boundary.
First, a concurrent three-phase fault with a fault resistance of $R_f=0.08\,\Omega$ was applied at two locations (Line 1-2 and Line 3-4) during the interval $t=5.0\text{--}5.5\,\text{s}$. This disturbance displaced the post-fault system state vector to $[\delta_1, \delta_2, \delta_3, \delta_4] = [-1.30, -1.70, 0.02, 0.30]\,\text{p.u.}$, as denoted by the yellow triangle in Fig. \ref{waves}\, (b). Consistent with the prediction, the waveforms in the blue frame confirm that the system successfully withstands the disturbance, with phase angles and currents rapidly converging to their pre-fault equilibrium points after the fault clearance.
Another test was conducted by applying a three-phase short-circuit fault (direct-to-ground, $R_f=0\,\Omega$) at the same locations (Line 1-2 and Line 3-4) for a prolonged duration of $t=5.0\text{--}6.0\,\text{s}$. This severe fault pushed the system state to $[\delta_1, \delta_2, \delta_3, \delta_4] = [-1.05, -1.69, 0.81, -0.02]\,\text{p.u.}$, marked by the red dot in Fig. \ref{waves}\, (b). The corresponding waveforms in the red frame show that the phase angles and phase-A current of GFM \#1 gradually recover to their pre-fault steady-state values, thus confirming the validity of estimated ROA.

Similarly, the transient stability of the 5-GFM system was evaluated under single and multi-point fault conditions.
First, a three-phase fault was applied to Line 3-4 with $R_f=0.08\,\Omega$ during the interval $t=10.0\text{--}10.5\,\text{s}$. The resulting perturbation shifted the state vector to $[\delta_1, \delta_2, \delta_3, \delta_4, \delta_5] = [-0.62, -2.23, -0.14, -0.83, -3.56]\,\text{p.u.}$, a critical operating point marked by the red dot in Fig. \ref{waves}\, (c). Despite the severity of this disturbance, the time-domain simulation reveals that the system successfully retains synchronism, with the state trajectories smoothly converging back to the pre-fault equilibrium following fault clearance.
To further test the validity of ROA, a multi-point direct-to-ground fault ($R_f=0\,\Omega$) was simultaneously applied to three transmission lines (Line 1-2, Line 3-4, and Line 2-5) during $t=10.0\text{--}10.5\,\text{s}$. This disturbance displaced the system state to $[\delta_1, \delta_2, \delta_3, \delta_4, \delta_5] = [-0.95, -2.12, 0.525, -0.882, -0.307]\,\text{p.u.}$, corresponding to the yellow triangle in Fig. \ref{waves}\, (c). Consistent with the theoretical prediction, the associated waveforms demonstrate a rapid damping of electromechanical oscillations, ultimately settling to a stable steady-state operating point.
In all investigated cases, the results confirm that the proposed framework provides a reliable and precise stability certificate, effectively handling the complexity of high-dimensional grid-forming inverter systems.

\section{Conclusion}
This paper presents a Lipschitz-enforced learning framework designed to accelerate the transient stability verification of networked GFM systems. By incorporating Lipschitz continuity into the verification phases, the proposed approach establishes a deterministic algebraic checking mechanism. This formulation effectively bypasses the heavy computational burden associated with exhaustive verification in high-dimensional state spaces, thereby substantially improving the scalability of neural Lyapunov methods. The efficacy of this framework has been rigorously validated on networked GFM inverter systems of varying complexities. Comparative case studies demonstrate that the proposed method achieves a computational speedup of over 5 times compared to state-of-the-art. Furthermore, this efficiency is achieved without compromising the accuracy of the stability boundary estimation. Unlike conventional analytical methods, such as LMI and SOS techniques, which often yield conservative results due to model approximations, our framework directly addresses the intrinsic sinusoidal nonlinearities of the network, providing significantly less conservative estimates of the ROA. Extensive case studies and HIL results further confirm that the estimated stability boundaries are reliable under large-signal disturbances. Collectively, these results indicate that the proposed framework offers a tractable and efficient pathway for applying formal learning-based stability assessment to complex, high-order inverter-based resources systems.

\appendices
\section{Autonomous System and Lipschitz Continuity}
\label{deflip}
Consider the autonomous nonlinear system described by the following ordinary differential equation:
\begin{equation}
    \dot{x}(t)=f\!\left(x(t)\right),\ x(0)=x_0,
    \label{system}
\end{equation}
where $x(t) \in \mathcal{D} \subseteq \mathbb{R}^n$ denotes the system state variable vector, $\mathcal{D}$ is the domain containing the origin, and $x_0\in\mathcal{D}$ is the initial condition. The function $f: \mathcal{D} \to \mathbb{R}^n$ is a nonlinear vector field that characterizes the system's evolution.

\begin{definition}[Lipschitz Continuity]
    The function $f$ is said to be Lipschitz continuous on $\mathcal{D}$ if there exists a positive constant $L \ge 0$, such that:
\begin{equation}
    \| f(x) - f(y) \| \le L \| x - y \|, \quad \forall x, y \in \mathcal{D}
\end{equation}
where $\|\cdot \|$ denotes a standard vector norm (typically the Euclidean norm), and the scalar $L$ is referred to as the Lipschitz constant. 
\end{definition} 
\section{Proof of Lipschitz Verification Guarantee}
\label{app:proof}

\begin{theorem}
Let $\mathcal{L}_f V_\theta(x)$ be the Lie derivative function defined over the domain $\Omega$. Let $\mathcal{K}_{comp} = \hat{L}_V (1 + \hat{L}_f)$ be the estimated Lipschitz constant bounding the variation of $\mathcal{L}_f V_\theta(x)$. Let $\mathcal{G} \subset \Omega$ be a set of mesh points with spatial resolution $h$, such that for any $x \in \Omega$, $\min_{x_g \in \mathcal{G}} \|x - x_g\|_2 \le h$.

If the condition:
\begin{equation}
    \mathcal{L}_f V_\theta(x_g) < -\mathcal{K}_{comp} h
\end{equation}
holds for all $x_g \in \mathcal{G}$, then $\mathcal{L}_f V_\theta(x) < 0$ for all $x \in \Omega$.
\end{theorem}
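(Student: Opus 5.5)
The argument is a direct covering-plus-Lipschitz estimate, with the whole weight resting on the hypothesis that $\mathcal{K}_{comp}$ really bounds the variation of $g(x) := \mathcal{L}_f V_\theta(x)$ on $\Omega$. I will take this as the working interpretation of the statement: for all $x,y \in \Omega$,
\begin{equation*}
|g(x)-g(y)| \le \mathcal{K}_{comp}\,\|x-y\|_2 .
\end{equation*}

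Fix an arbitrary $x \in \Omega$. Since $\mathcal{G}$ is finite, the minimum in the $h$-net condition is attained, so I choose $x_g \in \mathcal{G}$ with $\|x-x_g\|_2 \le h$. The Lipschitz bound gives $g(x) \le g(x_g) + \mathcal{K}_{comp}\|x-x_g\|_2 \le g(x_g) + \mathcal{K}_{comp} h$. The strict mesh condition $g(x_g) < -\mathcal{K}_{comp} h$ then yields $g(x) < 0$. Since $x$ was arbitrary, the claim holds on all of $\Omega$. The inequality is strict even when $\mathcal{K}_{comp}=0$, because the mesh condition is strict.

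To make the Lipschitz hypothesis less of a black box, I would briefly justify the form $\hat L_V(1+\hat L_f)$. Write $g = \nabla V_\theta^\top f$ and split the difference as
\begin{equation*}
g(x)-g(y) = \nabla V_\theta(x)^\top\bigl(f(x)-f(y)\bigr) + \bigl(\nabla V_\theta(x)-\nabla V_\theta(y)\bigr)^\top f(y).
\end{equation*}
The first term is bounded by $L_V L_f \|x-y\|_2$, using $\|\nabla V_\theta\|_2\le L_V$ and the Definition of Lipschitz continuity in Appendix~\ref{deflip}. The second term needs a bound on the Lipschitz constant of $\nabla V_\theta$ times $\sup_\Omega\|f\|$. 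That bound is absorbed into the "$L_V \cdot 1$" part only under a normalization assumption, for instance that the gradient's Lipschitz constant times $\sup\|f\|$ is at most $L_V$.

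\textbf{The main obstacle} is that $\hat L_V$ and $\hat L_f$ are estimated quantities. $\hat L_V$ is a maximum over mesh points only, and $\hat L_f$ is a Jacobian norm at the equilibrium only, so neither is a true Lipschitz constant over $\Omega$. I would therefore state the rigorous result conditionally, as the theorem's phrasing "bounding the variation" allows: the conclusion holds whenever $\mathcal{K}_{comp}$ upper-bounds the true Lipschitz constant of $g$ on $\Omega$. I would add a remark that this requires either true global bounds or a safety factor on the estimates.
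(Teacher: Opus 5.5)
Your argument is correct and is essentially the paper's proof: the same covering-plus-Lipschitz estimate that picks $x_g$ with $\|x-x_g\|_2\le h$, bounds $\mathcal{L}_f V_\theta(x)\le \mathcal{L}_f V_\theta(x_g)+\mathcal{K}_{comp}h$, and concludes with the strict mesh inequality. Your additional remarks are accurate and go beyond what the paper says. The paper's proof likewise simply assumes $\mathcal{K}_{comp}$ is a true Lipschitz constant of the Lie derivative on $\Omega$, which the estimated $\hat L_V(1+\hat L_f)$ need not be, and you correctly observe that strictness survives even when $\mathcal{K}_{comp}=0$.
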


\begin{proof}
Consider an arbitrary point $x \in \Omega$ in the continuous domain. By the definition of the grid coverage, there must exist a nearest mesh point $x_g \in \mathcal{G}$ such that the Euclidean distance satisfies $\|x - x_g\|_2 \le h$.

By the definition of the Lipschitz constant $\mathcal{K}_{comp}$ for the composite function $\mathcal{L}_f V_\theta(x)$, the variation of the function value is bounded by:
\begin{equation}
     \left| \mathcal{L}_f V_\theta(x) - \mathcal{L}_f V_\theta(x_g) \right| \leq \mathcal{K}_{comp} \|x - x_g\|_2
\end{equation}

Rearranging the inequality to upper bound the value at the unmonitored point $x$:
\begin{equation}
    \mathcal{L}_f V_\theta(x) \leq \mathcal{L}_f V_\theta(x_g) + \mathcal{K}_{comp} \|x - x_g\|_2
\end{equation}

Substituting the verified condition $\mathcal{L}_f V_\theta(x_g) < -\mathcal{K}_{comp} h$ into the inequality:
\begin{equation}
    \mathcal{L}_f V_\theta(x) < -\mathcal{K}_{comp} h + \mathcal{K}_{comp} \|x - x_g\|_2
\end{equation}

Since $\|x - x_g\|_2 \le h$ and $\mathcal{K}_{comp} > 0$, the term $\mathcal{K}_{comp} \|x - x_g\|_2$ is at most $\mathcal{K}_{comp} h$. Combining this with the previous strict inequality yields:
\begin{equation}
    \mathcal{L}_f V_\theta(x) < -\mathcal{K}_{comp} h + \mathcal{K}_{comp} h = 0
\end{equation}

Thus, $\mathcal{L}_f V_\theta(x) < 0$ holds strictly for the entire continuous domain $\Omega$.
\end{proof}
\bibliographystyle{IEEEtran} 
\bibliography{ref} 

\begin{thebibliography}{10}
\providecommand{\url}[1]{#1}
\csname url@samestyle\endcsname
\providecommand{\newblock}{\relax}
\providecommand{\bibinfo}[2]{#2}
\providecommand{\BIBentrySTDinterwordspacing}{\spaceskip=0pt\relax}
\providecommand{\BIBentryALTinterwordstretchfactor}{4}
\providecommand{\BIBentryALTinterwordspacing}{\spaceskip=\fontdimen2\font plus
\BIBentryALTinterwordstretchfactor\fontdimen3\font minus \fontdimen4\font\relax}
\providecommand{\BIBforeignlanguage}[2]{{%
\expandafter\ifx\csname l@#1\endcsname\relax
\typeout{** WARNING: IEEEtran.bst: No hyphenation pattern has been}%
\typeout{** loaded for the language `#1'. Using the pattern for}%
\typeout{** the default language instead.}%
\else
\language=\csname l@#1\endcsname
\fi
#2}}
\providecommand{\BIBdecl}{\relax}
\BIBdecl

\bibitem{Dynamics}
Y.~Li, Y.~Gu, Y.~Zhu, A.~Junyent-Ferré, X.~Xiang, and T.~C. Green, ``Impedance circuit model of grid-forming inverter: Visualizing control algorithms as circuit elements,'' \emph{IEEE Transactions on Power Electronics}, vol.~36, no.~3, pp. 3377--3395, 2021.

\bibitem{xiongfei}
X.~Wang, L.~Harnefors, and F.~Blaabjerg, ``Unified impedance model of grid-connected voltage-source converters,'' \emph{IEEE Transactions on Power Electronics}, vol.~33, no.~2, pp. 1775--1787, 2018.

\bibitem{intro}
Y.~Gu and T.~C. Green, ``Power system stability with a high penetration of inverter-based resources,'' \emph{Proceedings of the IEEE}, vol. 111, no.~7, pp. 832--853, 2023.

\bibitem{TDS}
X.~Fu, J.~Sun, M.~Huang, Z.~Tian, H.~Yan, H.~H.-C. Iu, P.~Hu, and X.~Zha, ``Large-signal stability of grid-forming and grid-following controls in voltage source converter: A comparative study,'' \emph{IEEE Transactions on Power Electronics}, vol.~36, no.~7, pp. 7832--7840, 2021.

\bibitem{Lya}
A.~M. Lyapunov, \emph{The General Problem of the Stability of Motion}.\hskip 1em plus 0.5em minus 0.4em\relax Boca Raton, FL, USA: CRC Press, 1992.

\bibitem{EMT}
K.~Sano, S.~Horiuchi, and T.~Noda, ``Comparison and selection of grid-tied inverter models for accurate and efficient emt simulations,'' \emph{IEEE Transactions on Power Electronics}, vol.~37, no.~3, pp. 3462--3472, 2022.

\bibitem{Las}
J.~LaSalle and S.~Lefschetz, \emph{Stability by Liapunov's Direct Method: With Applications}.\hskip 1em plus 0.5em minus 0.4em\relax New York, NY, USA: Academic Press, 1961.

\bibitem{BM}
Z.~Liu, X.~Ge, M.~Su, H.~Han, W.~Xiong, and Y.~Gui, ``Complete large-signal stability analysis of dc distribution network via brayton-moser’s mixed potential theory,'' \emph{IEEE Transactions on Smart Grid}, vol.~14, no.~2, pp. 866--877, 2023.

\bibitem{LMI}
Z.~Wang, L.~Guo, X.~Li, X.~Pang, X.~Li, X.~Zhou, and C.~Wang, ``Pll synchronization transient stability analysis of a weak-grid connected vsc during asymmetric faults,'' \emph{IEEE Transactions on Power Electronics}, vol.~39, no.~2, pp. 2140--2154, 2024.

\bibitem{SOS}
Q.~Song, J.~Chen, K.-H. Loo, J.~Chen, and P.~Chen, ``Large-signal stability analysis of two-stage cascaded dc/dc converter systems using sum-of-squares programming,'' \emph{IEEE Transactions on Power Electronics}, vol.~39, no.~2, pp. 2076--2085, 2024.

\bibitem{uni}
\BIBentryALTinterwordspacing
K.~Hornik, M.~Stinchcombe, and H.~White, ``Multilayer feedforward networks are universal approximators,'' \emph{Neural Networks}, vol.~2, no.~5, pp. 359--366, 1989. [Online]. Available: \url{https://www.sciencedirect.com/science/article/pii/0893608089900208}
\BIBentrySTDinterwordspacing

\bibitem{lars}
L.~Grüne, ``Computing lyapunov functions using deep neural networks,'' \emph{Journal of Computational Dynamics}, vol.~8, no.~2, pp. 131--152, 2021.

\bibitem{ETH}
S.~M. Richards, F.~Berkenkamp, and A.~Krause, ``The lyapunov neural network: Adaptive stability certification for safe learning of dynamic systems,'' \emph{Proc. of the 2nd Conference on Robot Learning (CoRL 2018)}, vol.~87, 2018.

\bibitem{redesign}
\BIBentryALTinterwordspacing
A.~Mehrjou, M.~Ghavamzadeh, and B.~Sch\"olkopf, ``Neural lyapunov redesign,'' in \emph{Proceedings of the 3rd Conference on Learning for Dynamics and Control}, ser. Proceedings of Machine Learning Research, vol. 144.\hskip 1em plus 0.5em minus 0.4em\relax PMLR, 07 -- 08 June 2021, pp. 459--470. [Online]. Available: \url{https://proceedings.mlr.press/v144/mehrjou21a.html}
\BIBentrySTDinterwordspacing

\bibitem{yachien}
Y.-C. Chang, N.~Roohi, and S.~Gao, ``Neural lyapunov control,'' in \emph{Advances in Neural Information Processing Systems}, vol.~32, 2019.

\bibitem{Tong}
T.~Huang, S.~Gao, and L.~Xie, ``A neural lyapunov approach to transient stability assessment of power electronics-interfaced networked microgrids,'' \emph{IEEE Transactions on Smart Grid}, vol.~13, no.~1, pp. 106--118, 2022.

\bibitem{ECCE}
Z.~Liu, J.~Zheng, and X.~Lu, ``Neural lyapunov based transient stability analysis of networked grid-forming inverters with unknown internal dynamics,'' in \emph{2025 IEEE Energy Conversion Conference Congress and Exposition (ECCE)}, 2025, pp. 1--6.

\bibitem{SCU}
Y.~Liu, J.~Zhang, Y.~Liu, M.~Yang, S.~Chen, L.~Zhou, and Y.~Wang, ``An improved neural lyapunov method for transient stability assessment of networked microgrids,'' \emph{IEEE Transactions on Smart Grid}, vol.~15, no.~2, pp. 1410--1422, 2024.

\bibitem{my}
Z.~Liu, J.~Zheng, and X.~Lu, ``Dissipation-based dynamics-aware learning scheme for transient stability analysis of networked black-box grid-forming inverters,'' \emph{IEEE Transactions on Power Electronics}, vol.~41, no.~3, pp. 3165--3170, 2026.

\bibitem{sicun}
S.~Gao, J.~Avigad, and E.~M. Clarke, ``$\delta$-complete decision procedures for satisfiability over the reals,'' in \emph{Automated Reasoning}.\hskip 1em plus 0.5em minus 0.4em\relax Berlin, Heidelberg: Springer Berlin Heidelberg, 2012, pp. 286--300.

\bibitem{chuchu}
\BIBentryALTinterwordspacing
S.~Zhang and C.~Fan, ``Learning to stabilize high-dimensional unknown systems using {L}yapunov-guided exploration,'' in \emph{Proceedings of the 6th Annual Learning for Dynamics \&amp; Control Conference}, vol. 242.\hskip 1em plus 0.5em minus 0.4em\relax PMLR, 15--17 Jul 2024, pp. 52--67. [Online]. Available: \url{https://proceedings.mlr.press/v242/zhang24a.html}
\BIBentrySTDinterwordspacing

\bibitem{timescale}
Y.~Zhang, L.~Xie, and Q.~Ding, ``Interactive control of coupled microgrids for guaranteed system-wide small signal stability,'' \emph{IEEE Transactions on Smart Grid}, vol.~7, no.~2, pp. 1088--1096, 2016.

\bibitem{nonlinear}
H.~K. Khalil, \emph{\BIBforeignlanguage{eng}{Nonlinear Systems}}, 3rd~ed.\hskip 1em plus 0.5em minus 0.4em\relax Upper Saddle River, N.J.: Pearson Education, 2000.

\end{thebibliography}
\end{document}